\documentclass[conference,letter]{IEEEtran}
%% depending on your installation, you may wish to adjust the top margin:
%\addtolength{\topmargin}{9mm}

%\usepackage{layout}
%\setlength{\textheight}{692pt}

\usepackage[utf8]{inputenc} 
\usepackage[T1]{fontenc}
\usepackage{url}
\usepackage{ifthen}
\usepackage{cite}
\usepackage{amsmath}%
\usepackage{wasysym}%
\usepackage{amssymb}

\usepackage{mdwmath}
\usepackage{blindtext}
\usepackage{eqparbox}
\DeclareMathOperator{\E}{\mathbb{E}}

\usepackage[english]{babel}
\usepackage{lipsum}
\usepackage{xcolor}
\usepackage{tikz}
\usepackage{graphicx}
%\definecolor{ocre}{RGB}{52,177,201}
%
%\renewcommand{\qedsymbol}{\textcolor{ocre}{$\blacksquare$}}
\usepackage{algorithm, algorithmic}
\usepackage{siunitx}

\usepackage{calc}
\newcolumntype{M}[1]{>{\centering\arraybackslash}m{#1}}
\newcolumntype{N}{@{}m{0pt}@{}}

\hyphenation{op-tical net-works semi-conduc-tor}

\usepackage{geometry}
\geometry{
	%total={210mm,297mm},
	left=0.625in,
	right=0.625in,
	top=0.75in,
	bottom=1in,
}
\linespread{0.95}

%\usepackage{geometry}
%\geometry{
%	total={210mm,297mm},
%	left=0.61in,
%	right=0.61in,
%	top=0.7in,
%	bottom=0.6in,
%}

%\parindent   0.15in

%\setlength{\textheight}{9.5in}
%\pagestyle{plain}

%\usepackage{geometry}
%\geometry{
%	paper=a4paper,
%	margin=54pt,
%}
%
%%\parindent   0.15in
%%\linespread{0.99}
%%\setlength{\textheight}{9.5in}
%%\pagestyle{plain}
%
%\pretolerance=50
%\tolerance=100
%\hyphenpenalty=1
%
%\clubpenalty=300
%\widowpenalty=300
%\displaywidowpenalty=100
%
%\hbadness = 10000
%\vbadness 10000
%
%\hfuzz = 3.0pt
%\def\fontsubfuzz{1.0pt}
%
%\flushbottom

\usepackage{amsthm}

\newtheorem{theorem}{{Theorem}}

\newtheorem{proposition}[theorem]{{Proposition}}

\DeclareMathAlphabet{\mathbfsl}{OT1}{ppl}{b}{it} %{OT1}{cmr}{bx}{it}

%------------------ Various \newcommand Declarations -------------------

%---> Math environments ---------- 

 % for a filled box
% V1.6 some journals use an open box instead that will just fit around a closed one

 % default to closed

\newcommand{\be}[1]{\begin{equation}\label{#1}}
\newcommand{\ee}{\end{equation}}

%---> Changing style of inequalities ------

\renewcommand{\leq}{\leqslant}
 
\renewcommand{\geq}{\geqslant}

%---> Script sets, ect ---------

%---> Font definitions ---------- 

%---> References to Theorems, etc. --- 

\newcommand{\Tref}[1]{Theo\-rem\,\ref{#1}}
\newcommand{\Pref}[1]{Pro\-po\-si\-tion\,\ref{#1}}

\newcommand{\Cref}[1]{Co\-ro\-lla\-ry\,\ref{#1}}

%---> Fields, ect ---------- 

%---> Various useful things ---------- 

%---> Things that save typing -------- 

% ------------------------------------------------------------

\IEEEoverridecommandlockouts
\begin{document}
\setlength{\abovedisplayskip}{4pt}
	\setlength{\belowdisplayskip}{4pt}
\title{Covert Millimeter-Wave Communication via a Dual-Beam Transmitter\vspace{-0.3cm}} 

% %%% Single author, or several authors with same affiliation:
 \author{
\IEEEauthorblockN{Mohammad Vahid Jamali and Hessam Mahdavifar}
   \IEEEauthorblockA{Department of Electrical Engineering and Computer Science, University of Michigan, Ann Arbor, MI 48109, USA\\
   	E-mails: mvjamali@umich.edu, hessam@umich.edu
   }
   \thanks{This work was supported by the National Science Foundation
under grant CCF--1763348.}    
}

 \maketitle
 
\begin{abstract}
In this paper, we investigate covert communication over millimeter-wave (mmWave) frequencies. In particular, a dual-beam mmWave transmitter, comprised of two independent antenna arrays, attempts to reliably communicate to a receiver Bob when hiding the existence of transmission from a warden Willie. In this regard, operating over mmWave bands not only increases the covertness thanks to directional beams, but also increases the transmission data rates given much more available bandwidths and enables ultra-low form factor transceivers due to the lower wavelengths used compared to the conventional radio frequency (RF) counterpart. We assume that the transmitter Alice employs one of its antenna arrays to form a directive beam for transmission to Bob. The other antenna array is used by Alice to generate another beam toward Willie as a jamming signal with its transmit power changing independently from a transmission block to another block. We characterize Willie's detection performance with the optimal detector and the closed-form of its expected value from Alice's perspective. We further derive the closed-form expression for the outage probability of the Alice-Bob link, which enables characterizing the optimal covert rate that can be achieved using the proposed setup. Our results demonstrate the superiority of mmWave covert communication, in terms of covertness and rate, compared to the RF counterpart.
\end{abstract}

\section{Introduction}
Rapid growth of wireless networks and emerging variety of applications, including Internet of Things (IoT) and critical controls, necessitate sophisticated solutions to secure the data transmission. Traditionally, the main objective of security schemes, using either cryptographic or information-theoretic approaches, is to secure data in the presence of adversary eavesdroppers. However, a stronger level of security can be obtained in wireless networks if the existence of communication is hidden from the adversaries. To this end, recently, there has been increasing attention to investigate covert communication, also referred to as communication with \textit{low probability of detection} (LPD), in different setups with the goal of hiding the existence of communication \cite{bash2013limits, bash2015hiding, che2013reliable, wang2016fundamental, bloch2016covert,tahmasbi2018first, arumugam2019embedding,arumugam2018covert}.  Generally speaking, covert communication refers to the problem of reliable communication between a transmitter Alice and a receiver Bob while maintaining a low probability of detecting the existence of communication from the perspective of a warden Willie \cite{bash2015hiding}.

The information-theoretic limits on the rate of covert communication have been presented in \cite{bash2013limits} for additive white Gaussian noise (AWGN) channels, where it is proved that $\mathcal{O}(\sqrt{n})$ bits of information can be transmitted to Bob, reliably and covertly, in $n\to\infty$ uses of the channel. The same square root law have been developed for binary symmetric channels (BSCs) \cite{che2013reliable} and  discrete memoryless channels (DMCs) \cite{wang2016fundamental}. Moreover, the principle of channel resolvability has been  used in \cite{bloch2016covert} to develop a coding scheme that can reduce the number of required shared key bits. Also, the first- and second-order asymptotics
of covert communication over binary-input discrete memoryless channels are studied in \cite{tahmasbi2018first}. The covert communications setup is also extended to a broadcast scenario \cite{arumugam2019embedding} and mutiple-access scenarios \cite{arumugam2018covert} from an information-theoretic perspective. 

The achievable covert rate in the aforementioned framework is zero in the limit of large $n$ when noting that $\lim_{n\to\infty}\mathcal{O}(\sqrt{n})/n=0$. However, it is demonstrated that positive covert rates can be achieved by introducing additional uncertainty, from Willie's perspective, into the system. For instance, it is shown in \cite{lee2015achieving,goeckel2016covert} that Willie's uncertainty about his noise power helps in achieving positive 
covert rates. Moreover, by considering slotted AWGN channels, it is proved in \cite{bash2016covert} that positive covert rates are achievable if the warden does not know when the transmission is taking place. The possibility of achieving positive-rate covert communication is further demonstrated for amplify-and-forward (AF) relaying networks with a greedy relay attempting to transmit its own information to the destination on top of forwarding the source's information \cite{hu2018covert}, dual-hop relaying systems with channel uncertainty \cite{wang2019covert}, a downlink scenario under channel uncertainty and with a legitimate user as cover \cite{shahzad2017covert}, and a single-hop setup with a full-duplex receiver acting as a jammer \cite{hu2018covert2}. %broadcasting artificial noise with random powers

Prior studies on covert communication in wireless setups often consider transmission over conventional radio frequency (RF) wireless links. However, a superior performance, in terms of covertness and achievable rates, can be attained when performing the covert communication over millimeter-wave (mmWave) bands. This makes mmWave bands a suitable option for covert communication to increase the security level of wireless applications involving critical data, motivating the investigation of \textit{mmWave covert communications}. Given the significant differences between the channel models and system architectures of mmWave communication and that of RF communication, the existing results on covert communications can not be immediately extended to covert communication over mmWave band. In particular, communication over mmWave bands can exploit directive beams, thanks to the deployment of massive antenna arrays, to compensate for the path losses over these ranges of frequencies. In this paper, we study covert communication over mmWave channels from a communication theory perspective, i.e., by analyzing the performance of the system in the limit of large block-lengths $n\to\infty$, and show the superiority of mmWave covert communication, in terms of covertness and rate, compared to the RF counterpart.

\section{Channel and System Models}
%In this section, we briefly summarize the distinct properties of mmWave channels pertaining to the system model in this paper, explained afterwards.
\subsection{Channel Model}
%As discussed in Section I, the two fundamental differences of mmWave links from the conventional RF links are the transmitter/receiver directionality through the use of large antenna arrays and their vulnerability to blocking. Therefore a general channel model for mmWave links should appropriately take into account these two important phenomena. 
Recent experimental studies have demonstrated that mmWave links are highly sensitive to blocking effects \cite{rappaport2013millimeter,andrews2017modeling}. In order to model this characteristic, a proper channel model should differentiate between the LOS and non-LOS (NLOS) channel models since the path loss in the NLOS links can be much higher than that of the LOS path due to the weak diffractions in the mmWave band. Therefore, two different sets of parameters are considered for the LOS and NLOS mmWave links, and a deterministic function $P_{\rm LOS}(d_{ij}) \in [0,1]$, that is a non-increasing function of $d_{ij}$, the link distance (in meters) between nodes $i$ and $j$, is the probability that an arbitrary link of length $d_{ij}$ is LOS. We consider a generic $P_{\rm LOS}(d_{ij})$ throughout our analysis and use the model $P_{\rm LOS}(d_{ij})={\rm e}^{-d_{ij}/200}$, suggested in \cite{andrews2017modeling}, for our numerical analysis.

Similar to \cite{andrews2017modeling}, we express the channel coefficient for an arbitrary mmWave link between the transmitter $i$ and receiver $j$ as $h_{ij}=\tilde{h}_{ij}\sqrt{{G}_{ij}L_{ij}}$, where $\tilde{h}_{ij}$, ${G}_{ij}$, and $L_{ij}$ are respectively the channel fading coefficient, the total directivity gain (including both the transmitter and receiver beamforming gains), and the path loss of the $i$-$j$ mmWave link.

%\begin{table}[t]
%	\centering
%	\caption{Probability mass function of the directivity gain of a node $q$ with beamsteering error \cite{di2015stochastic}.}
%	\label{T1}
%	\begin{tabular}{M{0.6cm}||M{2.1cm}M{3cm}}  
%		$k$ & $1$ & $2$\\ \hline\hline  {\vspace{0.1cm}}
%		$g^{(q)}_{k}$ {\vspace{0.1cm}} & $M^{(q)}_{\rm X}$ & $m^{(q)}_{\rm X}$ \\ \hline {\vspace{0.1cm}}
%$b^{(q)}_{k}$ {\vspace{0.1cm}} &$F_{|\mathcal{E}^{(q)}_{\rm X}|}\left({\theta^{(q)}_{\rm X}}/{2}\right)$& $1-F_{|\mathcal{E}^{(q)}_{\rm X}|}\left({\theta^{(q)}_{\rm X}}/{2}\right)$	\\ \hline 
%	\end{tabular}
%	%  \vspace{-0.2in}
%\end{table}

To characterize the path loss $L_{ij}$ of the $i-j$ link with the length $d_{ij}$, we consider different path loss exponents ($\alpha_{\rm L},\alpha_{\rm N}$) and intercepts ($C_{\rm L},C_{\rm N}$) for the LOS and NLOS links, respectively. Consequently, denoting the path losses associated with the LOS and NLOS links as $L^{(\rm L)}_{ij}$ and $L^{(\rm N)}_{ij}$, respectively, the path loss $L_{ij}$ is either $L^{(\rm L)}_{ij}=C_{\rm L}d_{ij}^{-\alpha_{\rm L}}$ with probability $P_{\rm LOS}(d_{ij})$ or $L^{(\rm N)}_{ij}=C_{\rm N}d_{ij}^{-\alpha_{\rm N}}$ with probability $1-P_{\rm LOS}(d_{ij})$.
%, where $\alpha_{\rm L}$ and $\alpha_{\rm N}$ are the path loss exponents of the LOS and NLOS links, and $C_{\rm L}$ and $C_{\rm N}$ are the intercepts of the LOS and NLOS formulas, respectively \cite{andrews2017modeling,bai2015coverage}. 
%Typical values for the path loss exponents and intercepts of the LOS and NLOS links can be found in the prior measurement results, e.g., reported in \cite{rappaport2013millimeter}.
% It is worth mentioning that although log-normal shadowing can be incorporated using the same approach in \cite{singh2015tractable} to advance the channel modeling, it is ignored in this work, similar to \cite{andrews2017modeling,bai2015coverage}, to simplify the analysis.

To ascertain the total directivity gain ${G}_{ij}$, we use the common sectored-pattern antenna model adopted in \cite{bai2015coverage,di2015stochastic} which approximates the actual array beam pattern by a step function, i.e., with a constant main lobe $M^{(q)}_{\rm X}$ over the beamwidth $\theta^{(q)}_{\rm X}$ and a constant side lobe $m^{(q)}_{\rm X}$ otherwise, where ${\rm X}\in\{{\rm TX,RX}\}$ and ${q}\in\{{i,j}\}$.
% represent the directivity parameters of the transmitter and receiver sides.
%  Then with the assumption of uniform directions for the interfering users, ${G}_{i,j}$ can be modeled as a discrete RV with the probability distribution ${G}_{i,j}=a^{(i,j)}_{k'}$ with probability $b^{(i,j)}_{k'}$, $k'\in\{1,2,3,4\}$, shown in Table I.
%, where $c^{(i)}_{\rm TX}=\theta^{(i)}_{\rm TX}/2\pi$ and $c^{(j)}_{\rm RX}=\theta^{(j)}_{\rm RX}/2\pi$.
Then, for a given link, if the spatial arrangement of the beams of the transmitter and receiver are known, the total directivity gain can be known from the product of the gains of the transmitter and receiver. In a particular case where the main lobe of a node $q$ (either transmitter or receiver) is pointed to another node, 
we assume some additive beamsteering errors denoted by a symmetric random variable (RV) $\mathcal{E}^{(q)}_{\rm X}$ around the transmitter-receiver direction. Similar to \cite{di2015stochastic}, it is assumed that node $q$ will have a gain of $M^{(q)}_{\rm X}$ if $|\mathcal{E}^{(q)}_{\rm X}|<\theta^{(q)}_{\rm X}/2$, i.e., with probability $F_{|\mathcal{E}^{(q)}_{\rm X}|}(\theta^{(q)}_{\rm X}/2)$ and a gain of $m^{(q)}_{\rm X}$ otherwise, where $F_X(x)$ is the cumulative distribution function (CDF) of the RV $X$. 
%Then the probability mass function of the directivity gain of a node $q$ with beamsteering error can be expressed as a RV taking the values $g_k^{(q)}$ with probability $b_k^{(q)}$, $k\in\{1,2\}$ as listed in Table I.
	% Then the probability density function (PDF) of the total directivity gain of the $i-j$ link can be modeled as ${G}_{ij}=a^{(ij)}_{k}$ with probability $c^{(ij)}_{k}$, $k\in\{1,2,3,4\}$, shown in Table I (superscripts $i$ and $j$ are omitted for the ease of notation). It is worth noting that in the case of perfect beam alignment, ${G}_{ij}$ for the desired $i-j$ link is deterministically equal to $M^{(i)}_{\rm TX}M^{(j)}_{\rm RX}$. 
%Similarly, we assume the total directivity gain of the desired link and also the backhaul/cooperative mmWave links as a constant deterministic value thanks to the perfect beam alignment for such links, and consider the stochastic model of Table I for the interfering links in a mmWave-based multiuser network.
% Additionally, similar to \cite{andrews2017modeling}, we assume that all of the mmWave-based users (in Scenario III) have the same directivity parameters and drop the superscript $i$ when referring to these users.

Finally, it is common in the literature to model the fading amplitude of mmWave links as independent Nakagami-distributed RVs each with shape parameter $\nu\geq 1/2$ and scale parameter $\Omega=\E[|{\tilde{h}}_{ij}|^2]=1$, and consider different Nakagami parameters for the LOS and NLOS links as $\nu_{\rm L}$ and $\nu_{\rm N}$, respectively \cite{andrews2017modeling,bai2015coverage}. In the case of Nakagami-$m$ fading with parameters $\nu_{\mathcal{B}}$, $\mathcal{B}\in\{{\rm L},{\rm N}\}$, and $\Omega=1$, $|{\tilde{h}}_{ij}|^2$ has a normalized gamma distribution with shape and scale parameters of $\nu_{\mathcal{B}}$ and $1/\nu_{\mathcal{B}}$, respectively, i.e., $|{\tilde{h}}_{ij}|^2\sim{\rm Gamma}(\nu_{\mathcal{B}},1/\nu_{\mathcal{B}})$.
%, i.e.,  \cite{simon2005digital,soleimani2016generalized}.
%\begin{align}\label{pdf_gamma}
%f_{{\kappa}_{i,j}}\left({\kappa}_{i,j}\right)=\frac{ {m_{\mathcal{B}}}^{m_{\mathcal{B}}} {{\kappa}_{i,j}}^{{m_{\mathcal{B}}}-1}  }{\Gamma({m_{\mathcal{B}}})}\exp\left(-{m_{\mathcal{B}}}{\kappa}_{i,j}\right).
%\end{align}

Note that, from an information-theoretic perspective, mmWave communications, and in general wideband communications, can be viewed as low-capacity scenarios \cite{fereydounian2018channel} suggesting a natural framework for mmWave covert communications. 

\subsection{System Model}
We consider the common setup of covert communication comprised of three parties: a transmitter Alice is intending to covertly communicate to a receiver Bob over mmWave bands when a warden Willie is attempting to  realize the existence of this communication. Alice employs a dual-beam mmWave transmitter consisting of two antenna arrays. The first antenna array is used for the transmission to Bob while the second array is exploited as a jammer to enable positive-rate mmWave covert communication. Therefore, when Bob is not in the main lobe of Alice-Willie link, he receives the jamming signal gained with the side lobe of the second array in addition to receiving the desired signal from Alice with the main lobe of the first array. Similarly, when Willie is not in the main lobe of Alice-Bob link, he receives the desired signal gained with the side lobe of the first array in addition to receiving the jamming signal from Alice with the main lobe of the second array. On the other hand, when Bob is in the main lobe of Willie's link, or vice versa, both of their received signals are gained with main lobes. Throughout our analysis we assume that Alice, Bob, and Willie are in some fixed locations (hence having some given directivity gains) and evaluate the impact of changing Willie's location in our numerical results. 

%\footnote{Multi-beam mmWave communication has been considered in different setups in the literature, see, e.g., \cite{wei2019multi}.}

Denoting the channel coefficients between Alice's first and second arrays and the node $j\in\{b,w\}$ (representing Bob and Willie) as $h_{aj,f}$ and $h_{aj,s}$, respectively, one can observe that the path loss gains are the same, i.e., $L_{aj,f}=L_{aj,s}=L_{aj}$, while the fading gains $|\tilde{h}_{aj,f}|^2$ and $|\tilde{h}_{aj,s}|^2$ are independent normalized gamma RVs. We assume quasi-static fading channels meaning that fading coefficients remain constant over a block of $n$ channel uses. We further assume that Alice transmits the desired signal with a publicly-known power $P_a$ while the jamming transmit power $P_J$ of its second array is not known and changes independently from a block to another block. In this paper, we assume that $P_J$ varies according to a uniform distribution in the interval $[0,P_J^{\rm max}]$ while the results can be extended to other distributions using a similar approach. Let $G_{aj,f}$ and $G_{aj,s}$ denote the total directivity gains of the links between Alice's first and second arrays and the node $j\in\{b,w\}$, respectively. Then, when Alice is transmitting to Bob, the received signals by Bob and Willie at each channel use $i=1,2,...,n$ can be expressed as
\begin{align}
\mathbf{y}_{b}(i)=&\sqrt{P_aG_{ab,f}L_{ab}}~\tilde{h}_{ab,f}\mathbf{x}_a(i)\label{yb}\nonumber\\
&+\sqrt{P_JG_{ab,s}L_{ab}}~\tilde{h}_{ab,s}\mathbf{x}_J(i)+\mathbf{n}_b(i),\\
\mathbf{y}_{w}(i)=&\sqrt{P_aG_{aw,f}L_{aw}}~\tilde{h}_{aw,f}\mathbf{x}_a(i)\label{yw}\nonumber\\
&+\sqrt{P_JG_{aw,s}L_{aw}}~\tilde{h}_{aw,s}\mathbf{x}_J(i)+\mathbf{n}_w(i),
\end{align}
respectively, where $\mathbf{x}_a$ and $\mathbf{x}_J$ are Alice's desired and jamming signals, respectively, satisfying $\E[|\mathbf{x}_a(i)|^2]=\E[|\mathbf{x}_J(i)|^2]=1$. Moreover, $\mathbf{n}_b$ and $\mathbf{n}_w$ are zero-mean Gaussian noises at Bob and Willie's receiver with  variances $\sigma^2_b$ and $\sigma^2_w$, respectively. 

\section{Willie's Detection Error Rate}

As discussed earlier, Willie attempts to detect whether Alice is transmitting to Bob or not. We assume that Willie has a perfect knowledge about his channel from Alice and applies binary hypothesis testing when he is unaware of the value of $P_J$. The null hypothesis $\mathbb{H}_0$ states that Alice did not transmit to Bob while the alternative hypothesis $\mathbb{H}_1$ specifies that Alice did transmit to Bob. There are two types of errors that can occur. Willie's decision of hypothesis $\mathbb{H}_1$ when $\mathbb{H}_0$ is true is referred to as a \textit{false alarm} and its probability is denoted by $P_{\rm FA}$. On the other hand, Willie's decision in favor of $\mathbb{H}_0$ when $\mathbb{H}_1$ is true classifies a \textit{missed detection}, whose probability is denoted by $P_{\rm MD}$. Then Willie's overall detection error is denoted by $P_{e,w}\triangleq P_{\rm FA}+P_{\rm MD}$. We say a positive-rate covert communication is possible if for any $\epsilon>0$ there exists a positive-rate communication between Alice and Bob satisfying $P_{e,w}\geq1-\epsilon$ as the number of channel uses $n\to\infty$. 

%\begin{figure*}
%	\normalsize
%		\setcounter{equation}{10}
%	\begin{align}\label{Epew}
%	\E[P^*_{e,w}]=\!\!\!\!\!\!\sum_{\mathcal{B}\in\{{\rm L,N}\}}\!\!\!\!\!P_{aw}(\mathcal{B})\sum_{k=1}^{2}b_k^{(a,s)}\Bigg[&\Bigg(\!\sum_{l=0}^{\nu_{\mathcal{B}}}\!\binom{\nu_{\mathcal{B}}}{l}\!(-1)^l\left(\!1\!+\!l\frac{\eta_{\mathcal{B}}P_J^{\rm max}g^{(a,s)}}{P_am_{a,f}\nu_{\mathcal{B}}}\!\right)^{\!\!-\nu_{\mathcal{B}}}\Bigg)\!\!\times\!\Bigg(\!1\!-\!\sum_{l_1=0}^{\nu_{\mathcal{B}}}\!\binom{\nu_{\mathcal{B}}}{l_1}\!(-1)^{l_1}\!\left(\!1\!+\!l_1\frac{\eta_{\mathcal{B}}P_J^{\rm max}g^{(a,s)}}{P_am_{a,f}\nu_{\mathcal{B}}}\right)^{\!\!-\nu_{\mathcal{B}}}\nonumber\\
%&+\frac{P_am_{a,f}}{P_J^{\rm max}g^{(a,s)}}\sum_{l_2=0}^{\nu_{\mathcal{B}}}\!\binom{\nu_{\mathcal{B}}}{l_2}\frac{(-1)^{l_2}\nu_{\mathcal{B}}^{\nu_{\mathcal{B}}}}{\eta_{\mathcal{B}}l_2\Gamma(\nu_{\mathcal{B}})}	I(\nu_{\mathcal{B}},l_2,g^{(a,s)})\Bigg)\Bigg],
%	\end{align}
%		\hrulefill
%\end{figure*}
\begin{figure*}
	\normalsize
	\setcounter{equation}{10}
	\begin{align}\label{Epew}
\!\!\!\!\!\!\!\!	\E[P^*_{e,w}]\!=\!\!\!\!\!\!\!\sum_{\mathcal{B}\in\{{\rm L,N}\}}\!\!\!\!\!\!P_{aw}(\mathcal{B})\!\sum_{k=1}^{2}b_k^{(a,s)}\!\bigg[1\!+\!S(\nu_{\mathcal{B}},g_k^{(a,s)})\bigg]\!\!\times\!\!\bigg[1\!-\!S(\nu_{\mathcal{B}},g_k^{(a,s)})\!+\!\frac{P_am_{a,f}\nu_{\mathcal{B}}^{\nu_{\mathcal{B}}}}{P_J^{\rm max}g_k^{(a,s)}\eta_{\mathcal{B}}\Gamma(\nu_{\mathcal{B}})}\!\sum_{l=1}^{\nu_{\mathcal{B}}}\!\!\binom{\nu_{\mathcal{B}}}{l}\frac{(-1)^{l}}{l}	I(\nu_{\mathcal{B}},l,g_k^{(a,s)})\bigg]\!.\!\!
	\end{align}
	\hrulefill
\vspace{-0.5cm}
\end{figure*}
\subsection{$P_{e,w}$ with the Optimal Detector at Willie}
As it is proved in \cite[Lemma 2]{sobers2017covert} for AWGN channels and also pointed out in \cite[Lemma 1]{shahzad2017covert}, the optimal decision rule that minimizes Willie's detection error
% in our system model
  is given by
\begin{align}\label{radiometer}	\setcounter{equation}{3}
T_w\triangleq\frac{1}{n}\sum_{i=1}^{n}|\mathbf{y}_{w}(i)|^2
\underset{\mathbb{H}_0}{\overset{\mathbb{H}_1}{\gtrless}}\tau,
\end{align}
where $\tau$ is Willie's detection threshold for which we obtain its optimal value/range later in this subsection. Also, using \eqref{yw}, the \textit{strong law of large numbers}, and noting that $n\to\infty$ (see, e.g., the proof of \cite[Theorem 3]{shahzad2017covert}), $T_w$ under hypotheses $\mathbb{H}_0$ and $\mathbb{H}_1$ is given by
\begin{align}
T_w^{\mathbb{H}_0}&=P_JG_{aw,s}L_{aw}|\tilde{h}_{aw,s}|^2+\sigma^2_w,\label{TH0}\\
T_w^{\mathbb{H}_1}&\!=\!P_aG_{aw,f}L_{aw}|\tilde{h}_{aw,f}|^2\!+\!P_JG_{aw,s}L_{aw}|\tilde{h}_{aw,s}|^2\!+\!\sigma^2_w,\label{TH1}
\end{align}
respectively. The optimal threshold of Willie's detector and his corresponding detection error are characterized in the following theorem.
\begin{theorem}\label{Thm1}
The optimal threshold $\tau^*$ for Willie's detector is in the interval
\begin{align}\label{tau*}
\tau^*\in\left\{\begin{matrix}
[\lambda_1,\lambda_2], &\lambda_1<\lambda_2, \\ 
[\lambda_2,\lambda_1], &\lambda_1\geq\lambda_2,
\end{matrix}\right.
\end{align}
and  the corresponding minimum detection error rate is 
\begin{align}\label{pe}
P_{e,w}^*=\left\{\begin{matrix}
\hspace{-2.9cm}0, &\lambda_1<\lambda_2, \\ 
1-\frac{P_aG_{aw,f}|\tilde{h}_{aw,f}|^2}{P_J^{\rm max}G_{aw,s}|\tilde{h}_{aw,s}|^2}, &\lambda_1\geq\lambda_2,
\end{matrix}\right.
\end{align}
where $\lambda_1\triangleq P_J^{\rm max}G_{aw,s}L_{aw}|\tilde{h}_{aw,s}|^2+\sigma^2_w$ and  $\lambda_2\triangleq P_aG_{aw,f}L_{aw}|\tilde{h}_{aw,f}|^2+\sigma^2_w$.
\end{theorem}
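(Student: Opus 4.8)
The plan is to treat the fading realizations $|\tilde{h}_{aw,f}|^2$ and $|\tilde{h}_{aw,s}|^2$ as fixed and to regard the unknown jamming power $P_J\sim\text{Uniform}[0,P_J^{\rm max}]$ as the only source of randomness in the limiting statistics \eqref{TH0}--\eqref{TH1}. Each of $T_w^{\mathbb{H}_0}$ and $T_w^{\mathbb{H}_1}$ is an affine function of $P_J$ with the \emph{same} slope $G_{aw,s}L_{aw}|\tilde{h}_{aw,s}|^2$, so both are uniformly distributed: $T_w^{\mathbb{H}_0}$ on $[\sigma^2_w,\lambda_1]$ and $T_w^{\mathbb{H}_1}$ on $[\lambda_2,\lambda_2+P_J^{\rm max}G_{aw,s}L_{aw}|\tilde{h}_{aw,s}|^2]$, where $\lambda_1$ is precisely the upper endpoint of the $\mathbb{H}_0$ support and $\lambda_2$ the lower endpoint of the $\mathbb{H}_1$ support.

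First I would write the two error probabilities for the threshold rule \eqref{radiometer} as functions of $\tau$. From the uniform laws, $P_{\rm FA}(\tau)=\Pr(T_w^{\mathbb{H}_0}>\tau)$ and $P_{\rm MD}(\tau)=\Pr(T_w^{\mathbb{H}_1}<\tau)$ are piecewise linear in $\tau$, each having a sloped portion on its own support and being clamped to $0$ or $1$ outside it. A direct computation gives slope $-1/(P_J^{\rm max}G_{aw,s}L_{aw}|\tilde{h}_{aw,s}|^2)$ for $P_{\rm FA}$ and the opposite slope $+1/(P_J^{\rm max}G_{aw,s}L_{aw}|\tilde{h}_{aw,s}|^2)$ for $P_{\rm MD}$ on the respective supports. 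The equality of these magnitudes is exactly the consequence of the two statistics sharing the coefficient of $P_J$, and it is the crux of the argument.

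Next I would split into the two cases. If $\lambda_1<\lambda_2$ the supports are disjoint, so any threshold $\tau\in[\lambda_1,\lambda_2]$ yields $P_{\rm FA}=P_{\rm MD}=0$ and hence $P_{e,w}^*=0$, giving both the optimal interval and the zero error rate. If $\lambda_1\geq\lambda_2$ the supports overlap on $[\lambda_2,\lambda_1]$; there both terms are in their sloped regimes, and because the slopes cancel, $P_{e,w}(\tau)=P_{\rm FA}(\tau)+P_{\rm MD}(\tau)$ is \emph{constant} on $[\lambda_2,\lambda_1]$. Evaluating it most easily at $\tau=\lambda_1$, where $P_{\rm FA}=0$ and $P_{\rm MD}=(\lambda_1-\lambda_2)/(P_J^{\rm max}G_{aw,s}L_{aw}|\tilde{h}_{aw,s}|^2)$, and cancelling the common factor $L_{aw}$, yields the claimed $P_{e,w}^*=1-P_aG_{aw,f}|\tilde{h}_{aw,f}|^2/(P_J^{\rm max}G_{aw,s}|\tilde{h}_{aw,s}|^2)$.

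Finally I would confirm optimality by checking thresholds outside the overlap: for $\tau<\lambda_2$ we have $P_{\rm MD}=0$ but $P_{\rm FA}$ strictly exceeds its value at $\lambda_2$, and symmetrically for $\tau>\lambda_1$ the term $P_{\rm MD}$ strictly exceeds the constant while $P_{\rm FA}=0$; in either direction $P_{e,w}$ only increases, so the minimum is attained exactly on $[\lambda_2,\lambda_1]$. I expect the only subtlety -- rather than any real obstacle -- to be bookkeeping the clamping of $P_{\rm FA}$ and $P_{\rm MD}$ at the support endpoints and checking that $\lambda_2\geq\sigma^2_w$ so the overlap sits inside the $\mathbb{H}_0$ support; the substance of the result is just the slope cancellation that makes the detection error flat across the admissible threshold range.
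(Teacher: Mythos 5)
Your proposal is correct and follows essentially the same route as the paper: both derive $P_{\rm FA}(\tau)$ and $P_{\rm MD}(\tau)$ as piecewise-linear functions of $\tau$ from the uniform law of $P_J$ (the paper's \eqref{pfa}--\eqref{pmd} are exactly your clamped affine expressions), and then minimize the sum over $\tau$. The only difference is that the paper outsources the final optimization step to the proof of \cite[Theorem 1]{hu2018covert2}, whereas you carry it out explicitly via the slope-cancellation argument (constancy of $P_{\rm FA}+P_{\rm MD}$ on the overlap $[\lambda_2,\lambda_1]$, strict increase outside it), making your version self-contained; your check that $\lambda_2\geq\sigma^2_w$, so the overlap lies inside the $\mathbb{H}_0$ support, is also correct and needed.
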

\begin{proof}
Using \eqref{TH0} the false alarm probability is given by
\begin{align}\label{pfa}
P_{\rm FA}&=\Pr\left(T_w^{\mathbb{H}_0}>\tau\right)=\Pr\left(P_J>\frac{\tau-\sigma^2_w}{G_{aw,s}L_{aw}|\tilde{h}_{aw,s}|^2}\right)\nonumber\\
&=\left\{\begin{matrix}
\hspace{-3.5cm}1, &\tau<\sigma^2_w, \\ 
1-\frac{\tau-\sigma^2_w}{P_J^{\rm max}G_{aw,s}L_{aw}|\tilde{h}_{aw,s}|^2}, &\sigma^2_w\leq\tau\leq\lambda
_1,\\
\hspace{-3.5cm}0,&\tau\geq\lambda_1.
\end{matrix}\right.
\end{align}
Moreover, using \eqref{TH1} the missed detection probability is given by
\begin{align}\label{pmd}
P_{\rm MD}&=\Pr\left(T_w^{\mathbb{H}_1}<\tau\right)=\Pr\left(P_J<\frac{\tau-\lambda_2}{G_{aw,s}L_{aw}|\tilde{h}_{aw,s}|^2}\right)\nonumber\\
&=\left\{\begin{matrix}
\hspace{-2.8cm}0, &\tau<\lambda_2, \\ 
\frac{\tau-\lambda_2}{P_J^{\rm max}G_{aw,s}L_{aw}|\tilde{h}_{aw,s}|^2}, &\lambda_2\leq\tau\leq\lambda
_3,\\
\hspace{-2.8cm}1,&\tau\geq\lambda_3,
\end{matrix}\right.
\end{align}
where $\lambda_3\triangleq \lambda_2+P_J^{\rm max}G_{aw,s}L_{aw}|\tilde{h}_{aw,s}|^2$. Using \eqref{pfa} and \eqref{pmd}, and by following a similar argument as in the proof of \cite[Theoem 1]{hu2018covert2}, the range of optimal threshold and the corresponding error rate are obtained as \eqref{tau*} and \eqref{pe}, respectively. 
\end{proof}
\noindent\textbf{Remark 1.} Eq. \eqref{pe} shows that for small values of  $P_J^{\rm max}$ such that $P_J^{\rm max}G_{aw,s}|\tilde{h}_{aw,s}|^2\leq P_aG_{aw,f}|\tilde{h}_{aw,f}|^2$ Willie can attain a zero error rate negating the possibility of achieving a positive-rate covert communication in the limit of $n\to\infty$. Although increasing $P_J^{\rm max}$ beyond $P_aG_{aw,f}|\tilde{h}_{aw,f}|^2/(G_{aw,s}|\tilde{h}_{aw,s}|^2)$ can increase $P^*_{e,w}$ and enable a positive-rate covert communication ($P^*_{e,w}\to1$ as $P_J^{\rm max}\to\infty$), it also degrades the performance of the desired Alice-Bob link as we will see in Section IV. The superiority of mmWave covert communication to that of omni-directional  RF communication can be observed by noticing the beneficial impact of beamforming. In fact, in the received signal by Willie, $P_J$ is gained with $G_{aw,s}$ which is much larger than the gain $G_{aw,f}$ of $P_a$; this significantly degrades the performance of Willie's detector. The opposite situation happens for the Alice-Bob link where the desired signal is gained with $G_{ab,f}$ which is much larger than the gain $G_{ab,s}$ of the jamming signal. 

\subsection{$\E[P^*_{e,w}]$ From Alice's Perspective}
Since Alice and Bob are unaware of the instantaneous realization of the channel between Alice and Willie, they should rely on the expected value of $P_{e,w}^*$. Note also that the minimum error rate $P_{e,w}^*$ in \eqref{pe} is independent of the beamforming gain of Willie's receiver as it cancels out in the ratio of $G_{aw,f}/G_{aw,s}$ and also in  the comparison between $\lambda_1$ and $\lambda_2$. Furthermore, Alice perfectly knows the gain $m_{a,f}$ of the side lobe of her first array to Willie. However, she has uncertainty about the gain $g^{(a,s)}$ of the main lobe of the second array toward Willie due to the misalignment error; it is either $g_1^{(a,s)}\triangleq M_{a,s}$ with probability $b_1^{(a,s)}\triangleq F_{|\mathcal{E}_{a,s}|}\left({\theta_{a,s}}/{2}\right)$ or $g_2^{(a,s)}\triangleq m_{a,s}$ with probability $b_2^{(a,s)}\triangleq 1-F_{|\mathcal{E}_{a,s}|}\left({\theta_{a,s}}/{2}\right)$. Moreover, Alice and Bob do not know whether the Alice-Willie link is LOS or NLOS; hence, they should take into account two possibilities given the LOS probability $P_{\rm LOS}(d_{aw})$. Next, the expected value of $P^*_{e,w}$ form Alice's perspective is characterized in the following theorem.
\begin{theorem}\label{col2}
The expected value of $P^*_{e,w}$ form Alice's perspective can be characterized as \eqref{Epew} shown at the top of the next page
%\begin{align}\label{Epew}
%\E[P^*_{e,w}]=\!\!\!\!\!\!\sum_{\mathcal{B}\in\{{\rm L,N}\}}\!\!\!\!\!P_{aw}(\mathcal{B})\sum_{k=1}^{2}b_k^{(a,s)}\bigg[\bigg],
%\end{align}
where $P_{aw}({\rm L})=P_{\rm LOS}(d_{aw})$, $P_{aw}({\rm N})=1-P_{\rm LOS}(d_{aw})$, $\Gamma(\cdot)$ is the gamma function \cite[Eq. (8.310.1)]{gradshteyn2014table}, and $g_k^{(a,s)}$ and $b_k^{(a,s)}$ are defined above for $k\in\{1,2\}$. Moreover, the function $S(\nu_{\mathcal{B}},g_k^{(a,s)})$ is defined as
\begin{align}	\setcounter{equation}{11}
\!S(\nu_{\mathcal{B}},g_k^{(a,s)})\!\triangleq\!\sum_{l=1}^{\nu_{\mathcal{B}}}\!\binom{\nu_{\mathcal{B}}}{l}\!(-1)^l\left(\!1\!+\!l\frac{\eta_{\mathcal{B}}P_J^{\rm max}g_k^{(a,s)}}{P_am_{a,f}\nu_{\mathcal{B}}}\!\right)^{\!\!-\nu_{\mathcal{B}}}\!\!\!\!\!,
\end{align}
and
$I(\nu_{\mathcal{B}},l,g_k^{(a,s)})$ for $\nu_{\mathcal{B}}=1$ and $\nu_{\mathcal{B}}\geq2$ is defined as
\begin{align}
I(1,l,g_k^{(a,s)})&\triangleq\ln\left(1+l\frac{P_J^{\rm max}g_k^{(a,s)}}{P_am_{a,f}}\right),\label{I1}\\
I(\nu_{\mathcal{B}}\geq2,l,g_k^{(a,s)})&\triangleq\frac{(\nu_{\mathcal{B}}-2)!}{\nu_{\mathcal{B}}^{\nu_{\mathcal{B}}-1}}\bigg[1\nonumber\\
&\hspace{-0.5cm}-\left(1+l\frac{\eta_{\mathcal{B}}P_J^{\rm max}g_k^{(a,s)}}{P_am_{a,f}\nu_{\mathcal{B}}}\right)^{\!\!-\nu_{\mathcal{B}}+1}\bigg].
\end{align}
\end{theorem}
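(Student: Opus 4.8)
The plan is to compute the expectation in \eqref{pe} by averaging over the three sources of randomness that Alice faces: the LOS/NLOS status of the Alice-Willie link, the misalignment (main vs.\ side lobe) of the second array toward Willie, and the two independent fading coefficients $|\tilde{h}_{aw,f}|^2$ and $|\tilde{h}_{aw,s}|^2$. Since $P_{e,w}^*=0$ whenever $\lambda_1<\lambda_2$, the expectation collapses to integrating $1-\frac{P_aG_{aw,f}|\tilde{h}_{aw,f}|^2}{P_J^{\rm max}G_{aw,s}|\tilde{h}_{aw,s}|^2}$ over the event $\lambda_1\ge\lambda_2$, i.e.\ over the region where $P_J^{\rm max}G_{aw,s}|\tilde{h}_{aw,s}|^2\ge P_aG_{aw,f}|\tilde{h}_{aw,f}|^2$. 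First I would condition on $\mathcal{B}\in\{{\rm L,N}\}$ and on the lobe realization $k\in\{1,2\}$, attaching the weights $P_{aw}(\mathcal{B})$ and $b_k^{(a,s)}$; this produces the outer double sum in \eqref{Epew}. Within each branch the directivity gain $G_{aw,s}$ toward Willie equals $g_k^{(a,s)}$ while $G_{aw,f}=m_{a,f}$ (the side lobe of the first array), which is where the ratio $P_am_{a,f}/(P_J^{\rm max}g_k^{(a,s)})$ entering the bracketed terms originates.

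The core step is the double integral over the two gamma-distributed fading powers. Writing $X\triangleq|\tilde{h}_{aw,f}|^2\sim{\rm Gamma}(\nu_{\mathcal B},1/\nu_{\mathcal B})$ and $Y\triangleq|\tilde{h}_{aw,s}|^2\sim{\rm Gamma}(\nu_{\mathcal B},1/\nu_{\mathcal B})$ independently, I need
\[
\E\!\left[\left(1-\frac{P_am_{a,f}X}{P_J^{\rm max}g_k^{(a,s)}Y}\right)\mathbf{1}\!\left\{\frac{P_am_{a,f}X}{P_J^{\rm max}g_k^{(a,s)}Y}\le 1\right\}\right].
\]
My approach is to fix $Y=y$, integrate $X$ over $[0,cy]$ with $c\triangleq P_J^{\rm max}g_k^{(a,s)}/(P_am_{a,f})$, and then integrate over $y$. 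The inner integral splits into a probability term $\Pr(X\le cy)$, giving a lower incomplete gamma function, and a first-moment term $\E[X\,\mathbf{1}\{X\le cy\}]$. For integer $\nu_{\mathcal B}$ the incomplete gamma function expands as a finite sum, and I would use the binomial/exponential expansion $\big(\sum_{j}\frac{(\nu_{\mathcal B}y/\cdots)^{j}}{j!}e^{-\cdots}\big)$ together with the standard identity turning $(1-e^{-t}\sum\cdots)^{\nu_{\mathcal B}}$-type expressions into the alternating binomial sum $\sum_{l}\binom{\nu_{\mathcal B}}{l}(-1)^l(\cdots)$ that defines $S(\nu_{\mathcal B},g_k^{(a,s)})$. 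The factor $1+S$ then naturally captures the $\Pr(X\le cy)$ contribution averaged over $Y$.

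The harder term is the first-moment contribution $\E[X\,\mathbf{1}\{X\le cy\}]/y$ integrated against the density of $Y$; after integrating $X$ by parts (or differentiating the moment generating function) this leaves an integral in $y$ of the form $\int_0^\infty y^{\nu_{\mathcal B}-2}e^{-\nu_{\mathcal B}y}(1+c'y)^{-\nu_{\mathcal B}}\,dy$ up to constants. I expect this to be the main obstacle, because the $y^{\nu_{\mathcal B}-2}$ power behaves qualitatively differently for $\nu_{\mathcal B}=1$ (where it produces a logarithm, via $\int_0^\infty \frac{e^{-\nu y}}{1+c'y}\,dy$-type reduction) versus $\nu_{\mathcal B}\ge 2$ (where repeated integration or a reduction formula yields the rational $1-(1+\cdots)^{-\nu_{\mathcal B}+1}$ expression). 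This case split is precisely why $I(\nu_{\mathcal B},l,g_k^{(a,s)})$ is defined piecewise in \eqref{I1}--(13). I would handle $\nu_{\mathcal B}=1$ directly and treat $\nu_{\mathcal B}\ge 2$ by extracting the $(\nu_{\mathcal B}-2)!$ normalization and recognizing the remaining integral as a Beta-type form, then assembling the prefactor $\frac{P_am_{a,f}\nu_{\mathcal B}^{\nu_{\mathcal B}}}{P_J^{\rm max}g_k^{(a,s)}\eta_{\mathcal B}\Gamma(\nu_{\mathcal B})}$ and the inner sum $\sum_l\binom{\nu_{\mathcal B}}{l}\frac{(-1)^l}{l}I(\nu_{\mathcal B},l,g_k^{(a,s)})$. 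Collecting the $1+S$ factor with the bracketed $1-S+(\text{first-moment term})$ factor and restoring the outer sums over $\mathcal B$ and $k$ yields \eqref{Epew}, completing the proof.
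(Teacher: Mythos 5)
Your overall skeleton matches the paper's: condition on the blockage state $\mathcal{B}$ and the lobe realization $k$ (weights $P_{aw}(\mathcal{B})$ and $b_k^{(a,s)}$), set $G_{aw,f}=m_{a,f}$ and $G_{aw,s}=g_k^{(a,s)}$, and reduce everything to a double integral over the two independent normalized-gamma fading powers, split into a CDF term and a first-moment term. However, there is a genuine gap at the core computation: you never invoke Alzer's lemma, and without it the stated closed form is unreachable. The paper obtains both $1+S$ and the $I$-terms by replacing the gamma CDF $\Pr(X<x)$ with the tight approximation $\left[1-\exp(-\eta_{\mathcal{B}}x)\right]^{\nu_{\mathcal{B}}}$, where $\eta_{\mathcal{B}}=\nu_{\mathcal{B}}(\nu_{\mathcal{B}}!)^{-1/\nu_{\mathcal{B}}}$, expanding it by the binomial theorem into $\sum_{l}\binom{\nu_{\mathcal{B}}}{l}(-1)^l e^{-l\eta_{\mathcal{B}}x}$, and then averaging over the other fading power via the gamma MGF. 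Your proposal instead uses the \emph{exact} finite-sum expansion of the incomplete gamma function for integer $\nu_{\mathcal{B}}$, i.e.\ $1-e^{-\nu_{\mathcal{B}}x}\sum_{j=0}^{\nu_{\mathcal{B}}-1}(\nu_{\mathcal{B}}x)^j/j!$; this is \emph{not} of the power form $(1-e^{-t}\cdots)^{\nu_{\mathcal{B}}}$ you write (that form is precisely Alzer's approximation, which you conflate with the exact expansion), and averaging it over $Y$ produces negative-binomial-type terms $\binom{\nu_{\mathcal{B}}+j-1}{j}c^j(1+c)^{-\nu_{\mathcal{B}}-j}$ rather than the alternating sums $\sum_l\binom{\nu_{\mathcal{B}}}{l}(-1)^l(1+l\eta_{\mathcal{B}}c/\nu_{\mathcal{B}})^{-\nu_{\mathcal{B}}}$ that define $S$. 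In particular, the constant $\eta_{\mathcal{B}}$ can never arise on your route, so you would end with a structurally different (exact) expression, not \eqref{Epew}.

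There is a second gap in your final assembly. Your setup correctly computes the unnormalized restricted expectation $\E\left[\left(1-X/(C_1Y)\right)\mathbf{1}\{X\le C_1Y\}\right]$ with $C_1=P_J^{\rm max}g_k^{(a,s)}/(P_am_{a,f})$; by linearity this evaluates to the \emph{additive} form $\Pr(X\le C_1Y)-C_1^{-1}\E[(X/Y)\mathbf{1}\{X\le C_1Y\}]$, which does not "collect" into the \emph{product} $[1+S]\times[1-S+(\text{first-moment term})]$ appearing in \eqref{Epew}; the two agree only if the cross terms cancel, which does not hold in general. The product structure in \eqref{Epew} comes from the paper's decomposition \eqref{Epew1}, $\E[P^{\rm C}_{e,w}]=\Pr(\lambda^{\rm C}_1\ge\lambda^{\rm C}_2)\left(1-C_1^{-1}\E[X/Y\mid \lambda^{\rm C}_1\ge\lambda^{\rm C}_2]\right)$, i.e.\ a \emph{normalized} conditional expectation multiplied back by the probability of the conditioning event, and your claim that your computation "yields \eqref{Epew}" is asserted, not derived. (Your cleaner unnormalized route actually exposes a tension inside the paper itself: the integral written in \eqref{eqcolproof3} is the unnormalized restriction even though \eqref{Epew1} multiplies it by $\Pr(\lambda^{\rm C}_1\ge\lambda^{\rm C}_2)$; but to prove the theorem \emph{as stated} you must follow the paper's conditional decomposition, and your proposal neither does that nor flags the discrepancy.)
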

\begin{proof}
Let $P^{\rm C}_{e,w}$, $\lambda_1^{\rm C}$, and $\lambda_2^{\rm C}$ denote the values of $P^*_{e,w}$, $\lambda_1$, and $\lambda_2$, respectively, conditioned on the blockage instance $\mathcal{B}\in\{{\rm L,N}\}$ and the gain $g^{(a,s)}$ of Alice's second array to Willie. Then using \eqref{pe} we have
\begin{align}\label{Epew1}
\!\!\!\!\!\E[P^{\rm C}_{e,w}]\!\!&=\!\E_{\lambda_1^{\rm C}<\lambda_2^{\rm C}}[P^{\rm C}_{e,w}]\!\Pr(\lambda^{\rm C}_1\!\!<\!\!\lambda^{\rm C}_2)\!+\!\E_{\lambda^{\rm C}_1\geq\lambda^{\rm C}_2}[P^{\rm C}_{e,w}]\!\Pr(\lambda^{\rm C}_1\!\!\geq\!\!\lambda^{\rm C}_2)\!\nonumber\\
&\hspace{-0.7cm}=\Pr(\lambda^{\rm C}_1\!\geq\!\lambda^{\rm C}_2)\!\left(\!1\!-\!\frac{P_am_{a,f}}{P_J^{\rm max}g^{(a,s)}}\E_{\lambda^{\rm C}_1\geq\lambda^{\rm C}_2}\!\left[\!\frac{|\tilde{h}^{(\mathcal{B})}_{aw,f}|^2}{|\tilde{h}^{(\mathcal{B})}_{aw,s}|^2}\!\right]\!\right)\!.\!
\end{align}

The closed form of $\Pr(\lambda^{\rm C}_1\geq\lambda^{\rm C}_2)$ can be  derived as
\begin{align}\label{eqcolproof2}
\Pr(\lambda^{\rm C}_1\!\geq\!\lambda^{\rm C}_2)&=\Pr\left(|\tilde{h}^{(\mathcal{B})}_{aw,f}|^2\leq\frac{P_J^{\rm max}g^{(a,s)}}{P_am_{a,f}}|\tilde{h}^{(\mathcal{B})}_{aw,s}|^2\right)\nonumber\\
&\hspace{-1.7cm}\stackrel{(a)}{=}\sum_{l=0}^{\nu_{\mathcal{B}}}\!\binom{\nu_{\mathcal{B}}}{l}\!(-1)^l\E_{|\tilde{h}^{(\mathcal{B})}_{aw,s}|^2}\!\!\left[\exp\!\left(\!-\eta_{\mathcal{B}}l\frac{P_J^{\rm max}g^{(a,s)}}{P_am_{a,f}}|\tilde{h}^{(\mathcal{B})}_{aw,s}|^2\!\right)\!\right]
\nonumber\\
&\hspace{-1.7cm}\stackrel{(b)}{=}\sum_{l=0}^{\nu_{\mathcal{B}}}\!\binom{\nu_{\mathcal{B}}}{l}\!(-1)^l\left(1+l\frac{\eta_{\mathcal{B}}P_J^{\rm max}g^{(a,s)}}{P_am_{a,f}\nu_{\mathcal{B}}}\right)^{\!\!-\nu_{\mathcal{B}}},
\end{align}
where step $(a)$ follows from Alzer's lemma \cite{alzer1997some}, \cite[Lemma 6]{bai2015coverage} for a normalized gamma RV $X\sim{\rm Gamma}(\nu_{\mathcal{B}},1/\nu_{\mathcal{B}})$, which states that $\Pr\left(X<x\right)$ can tightly be approximated with $\left[1-\exp(-\eta_{\mathcal{B}} x)\right]^{\nu_{\mathcal{B}}}$ where $\eta_{{\mathcal{B}}}=\nu_{\mathcal{B}}(\nu_{\mathcal{B}}!)^{-1/\nu_{\mathcal{B}}}$, and then applying the binomial theorem assuming $\nu_{\mathcal{B}}$ is an integer \cite{bai2015coverage}. Moreover, step $(b)$ is derived using the moment generating function (MGF) of a normalized gamma RV $X$, i.e., $\E[{\rm e}^{tX}]=(1-t/\nu_{\mathcal{B}})^{-\nu_{\mathcal{B}}}$ for any $t<\nu_{\mathcal{B}}$.

Moreover, for the expectation term in \eqref{Epew1} we have
\begin{align}\label{eqcolproof3}
&\!\!\E_{\lambda^{\rm C}_1\geq\lambda^{\rm C}_2}\!\!\left[\!\frac{|\tilde{h}^{(\mathcal{B})}_{aw,f}|^2}{|\tilde{h}^{(\mathcal{B})}_{aw,s}|^2}\!\right]\!\!\!=\!\E\!\left[\!\frac{|\tilde{h}^{(\mathcal{B})}_{aw,f}|^2}{|\tilde{h}^{(\mathcal{B})}_{aw,s}|^2}{\Bigg|}|\tilde{h}^{(\mathcal{B})}_{aw,f}|^2\!\leq\!\!\frac{P_J^{\rm max}g^{(a,s)}}{P_am_{a,f}}|\tilde{h}^{(\mathcal{B})}_{aw,s}|^2\!\right]\nonumber\\
&{\hspace{0.5cm}}=\int_{0}^{\infty}\frac{f_{|\tilde{h}^{(\mathcal{B})}_{aw,s}|^2}(y)}{y}\Bigg[\underbrace{\int_{0}^{C_1y}xf_{|\tilde{h}^{(\mathcal{B})}_{aw,f}|^2}(x)dx}_{V_1}\Bigg]dy,
\end{align}
where $C_1\triangleq\frac{P_J^{\rm max}g^{(a,s)}}{P_am_{a,f}}$, and $f_{|\tilde{h}^{(\mathcal{B})}_{aw,f}|^2}(x)$ and $f_{|\tilde{h}^{(\mathcal{B})}_{aw,s}|^2}(y)$ are the probability density functions of the fading coefficients $|\tilde{h}^{(\mathcal{B})}_{aw,f}|^2$ and $|\tilde{h}^{(\mathcal{B})}_{aw,s}|^2$, respectively. Applying the part-by-part integration rule to $V_1$ and then using Alzer's lemma together with the binomial theorem yields
\begin{align}\label{eqcolproof4}
V_1=&C_1y\sum_{l_1=0}^{\nu_{\mathcal{B}}}\!\binom{\nu_{\mathcal{B}}}{l_1}\!(-1)^{l_1}{\rm e}^{-l_1\eta_{\mathcal{B}}C_1y}\nonumber\\
&-C_1y-\sum_{l_2=1}^{\nu_{\mathcal{B}}}\!\binom{\nu_{\mathcal{B}}}{l_2}\frac{(-1)^{l_2}}{\eta_{\mathcal{B}}l_2}\left[1-{\rm e}^{-l_2\eta_{\mathcal{B}}C_1y}\right].
\end{align}
By plugging \eqref{eqcolproof4} into \eqref{eqcolproof3}, using the MGF of the normalized gamma RV $|\tilde{h}^{(\mathcal{B})}_{aw,s}|^2$, and then noting that $f_{|\tilde{h}^{(\mathcal{B})}_{aw,s}|^2}(y)=\nu_{\mathcal{B}}^{\nu_{\mathcal{B}}}y^{\nu_{\mathcal{B}}-1}{\rm e}^{-\nu_{\mathcal{B}}y}/\Gamma(\nu_{\mathcal{B}})$ we have
\begin{align}\label{eqcolproof5}
&\E_{\lambda^{\rm C}_1\geq\lambda^{\rm C}_2}\!\left[\!\frac{|\tilde{h}^{(\mathcal{B})}_{aw,f}|^2}{|\tilde{h}^{(\mathcal{B})}_{aw,s}|^2}\!\right]=C_1\sum_{l_1=1}^{\nu_{\mathcal{B}}}\binom{\nu_{\mathcal{B}}}{l_1}\!(-1)^{l_1}\left(1+l_1\frac{\eta_{\mathcal{B}}C_1}{\nu_{\mathcal{B}}}\right)^{\!\!-\nu_{\mathcal{B}}}\nonumber\\
&-\sum_{l_2=1}^{\nu_{\mathcal{B}}}\!\binom{\nu_{\mathcal{B}}}{l_2}\frac{(-1)^{l_2}\nu_{\mathcal{B}}^{\nu_{\mathcal{B}}}}{\eta_{\mathcal{B}}l_2\Gamma(\nu_{\mathcal{B}})}\bigg[\int_{0}^{\infty}{y^{\nu_{\mathcal{B}}-2}}{\rm e}^{-\nu_{\mathcal{B}}y}dy\nonumber\\
&\hspace{3cm}-\int_{0}^{\infty}{y^{\nu_{\mathcal{B}}-2}}{\rm e}^{-(l_2\eta_{\mathcal{B}}C_1+\nu_{\mathcal{B}})y}dy\bigg].
\end{align}
Now given that the parameter $\nu_{\mathcal{B}}$ of Nakagami-$m$ fading is always larger than or equal  to $0.5$ and it is assumed an integer here, we have $\nu_{\mathcal{B}}\in\mathbb{N}$ where $\mathbb{N}$  stands for the set of natural numbers.
%Finally, using \cite[Eq. (2.321.2) and (2.325.1)]{gradshteyn2014table} and given the definition of the function $I(\nu_{\mathcal{B}},\alpha)$ in the statement of \Tref{col2} the proof is complete by noting that 
For $\nu_{\mathcal{B}}\geq2$ using \cite[Eq. (3.351.3)]{gradshteyn2014table} we have $\int_{0}^{\infty}y^{\nu_{\mathcal{B}}-2}{\rm e}^{-\alpha y}dy=(\nu_{\mathcal{B}}-2)!/\alpha^{\nu_{\mathcal{B}}-1}$ for any real $\alpha>0$. On the other hand, for $\nu_{\mathcal{B}}=1$ using \cite[Eq. (2.325.1)]{gradshteyn2014table} we have $\int_{0}^{\infty}y^{-1}{\rm e}^{-\alpha y}dy={\rm Ei}(-\alpha y)|_0^{\infty}$, where ${\rm Ei}(\cdot)$ is the exponential integral function defined as \cite[Eq. (8.211.1)]{gradshteyn2014table} for negative arguments. Therefore, following a similar approach to the proof of \cite[Corollary 2]{jamali2019uplink} we can calculate the difference of the two integrals in \eqref{eqcolproof5} as $\lim_{y\to0}[{\rm Ei}(-(l_2\eta_{\mathcal{B}}C_1+\nu_{\mathcal{B}})y)-{\rm Ei}(-\nu_{\mathcal{B}}y)]=\ln([l_2\eta_{\mathcal{B}}C_1+\nu_{\mathcal{B}}]/\nu_{\mathcal{B}})$ which is equal to $\ln(1+l_2C_1)$ for $\nu_{\mathcal{B}}=1$ (note that $\eta_{\mathcal{B}}=1$ for $\nu_{\mathcal{B}}=1$, and ${\rm Ei}(-\infty)=0$). This completes the proof of the theorem given  the definition of $I(\nu_{\mathcal{B}},l,g^{(a,s)})$ in \Tref{col2}. 
\end{proof}

\noindent\textbf{Remark 2.} In the derivation of \Tref{col2} it is assumed that Willie is not in the main lobe of Alice's first antenna array and hence receives the covert signal by a side lobe gain $m_{a,f}$. However, if Willie is within the main lobe of the first array, we should include another averaging over the gain $g^{(a,f)}$ of the first array given the beamsteering error, i.e., that gain is either $g_1^{(a,f)}\triangleq M_{a,f}$ with probability $b_1^{(a,f)}\triangleq F_{|\mathcal{E}_{a,f}|}\left({\theta_{a,f}}/{2}\right)$ or $g_2^{(a,f)}\triangleq m_{a,f}$ with probability $b_2^{(a,f)}\triangleq 1-F_{|\mathcal{E}_{a,f}|}\left({\theta_{a,f}}/{2}\right)$.
\vspace{-0.4cm}
\section{Performance of the  Alice-Bob Link}
%In this section, we first characterize the performance metrics of the Alice-Bob link such as its outage probability and ergodic capacity. We then characterize the maximum achievable covert rate, i.e., the rate for which Alice can reliably communicate with Bob while maintaining $\E[P^*_{e,w}]>1-\epsilon$ for any $\epsilon>0$.
\vspace{-0.05cm}
\subsection{Outage Probability}
\vspace{-0.05cm}
We assume that Alice targets a rate $R_b$ requiring the Alice-Bob link to meet a threshold signal-to-interference-plus-noise ratio (SINR) $\gamma_{\rm th}\triangleq2^{R_b}-1$. Then the outage probability $P_{\rm out}^{\rm AB}\triangleq\Pr({\gamma_{ab}}<\gamma_{\rm th})$ in achieving $R_b$ can be characterized as \Tref{thm3}, where the SINR $\gamma_{ab}$ of the  Alice-Bob link is given using \eqref{yb} as
\begin{align}\label{gab}
\gamma_{ab}=\frac{{P_aG_{ab,f}L_{ab}}|\tilde{h}_{ab,f}|^2}{{P_JG_{ab,s}L_{ab}}|\tilde{h}_{ab,s}|^2+\sigma^2_b}.
\end{align}
Note that in addition to  $|\tilde{h}_{ab,f}|^2$, $|\tilde{h}_{ab,s}|^2$ and $P_J$, the blockage instance $\mathcal{B}\in\{{\rm L},{\rm N}\}$ and the antenna gains can also randomly change from a transmission block to another block. In particular, while we assume that the jamming signal arrives with the deterministic side lobe gain $m_{a,s}$, there are still uncertainties in the gains of Alice's first array and Bob's receiver (they are pointing their main lobes) due to the beamsteering error. Therefore, the gain $g^{(a,f)}$ of the main lobe of Alice's first array pointed to Bob is either $g_1^{(a,f)}\triangleq M_{a,f}$ with probability $b_1^{(a,f)}\triangleq F_{|\mathcal{E}_{a,f}|}\left({\theta_{a,f}}/{2}\right)$ or $g_2^{(a,f)}\triangleq m_{a,f}$ with probability $b_2^{(a,f)}\triangleq 1-F_{|\mathcal{E}_{a,f}|}\left({\theta_{a,f}}/{2}\right)$. Similarly, the gain $g^{(b)}$ of Bob's receiver is either $g_1^{(b)}\triangleq M_{b}$ with probability $b_1^{(b)}\triangleq F_{|\mathcal{E}_{b}|}\left({\theta_{b}}/{2}\right)$ or $g_2^{(b)}\triangleq m_{b}$ with probability $b_2^{(b)}\triangleq 1-F_{|\mathcal{E}_{b}|}\left({\theta_{b}}/{2}\right)$. Furthermore, in the derivation of \Tref{thm3} we assume that Willie is not in the main lobe of Alice's first array. However, if Willie is in the Alice-Bob direction, we should include another averaging of the gain of the main lobe of Alice's second array carrying the jammer signal, i.e., instead of a deterministic $m_{a,s}$ we should consider two possibilities $g_k^{(a,s)}$ with probabilities $b_k^{(a,s)}$, $k\in\{1,2\}$, defined in Section III-B.

\begin{theorem}\label{thm3}
The outage probability of the Alice-Bob link in achieving the target rate $R_b\triangleq\log_2(1+\gamma_{\rm th})$ can be characterized as
\begin{align}\label{pout}
&P_{\rm out}^{\rm AB}\!=\!\!\!\!\sum_{\mathcal{B}\in\{{\rm L,N}\}}\!\!\!\!P_{ab}(\mathcal{B})\sum_{k_1=1}^{2}b_{k_1}^{(a,f)}\sum_{k_2=1}^{2}b_{k_2}^{(b)}\bigg[1\!+\!\sum_{l=1}^{\nu_{\mathcal{B}}}\!\binom{\nu_{\mathcal{B}}}{l}(-1)^{l}\nonumber\\
&\hspace{1cm}\times\exp\!\bigg(\!-\frac{l\eta_{\mathcal{B}}\gamma_{\rm th}\sigma^2_b}{P_ag_{k_1}^{(a,f)}g^{(b)}_{k_2}L_{ab}^{(\mathcal{B})}}\bigg)V(\nu_{\mathcal{B}},l,g^{(a,f)}_{k_1})\bigg],
\end{align}
where $P_{ab}({\rm L})=P_{\rm LOS}(d_{ab})$ and $P_{ab}({\rm N})=1-P_{\rm LOS}(d_{ab})$. Moreover,
$V(\nu_{\mathcal{B}},l,g^{(a,f)}_{k_1})$ for $\nu_{\mathcal{B}}=1$ and $\nu_{\mathcal{B}}\geq2$ is defined as
\begin{align}
&\hspace{-2.8cm}\!V(1,l,g^{(a,f)}_{k_1})\!\triangleq\!\frac{P_ag^{(a,f)}_{k_1}}{P_J^{\rm max}l\gamma_{\rm th}m_{a,s}}\ln\!\bigg(\!1\!+\!\frac{P_J^{\rm max}l\gamma_{\rm th}m_{a,s}}{P_ag^{(a,f)}_{k_1}}\!\bigg),\!\\
V(\nu_{\mathcal{B}}\geq2,l,g^{(a,f)}_{k_1})&\triangleq\frac{\nu_{\mathcal{B}}P_ag^{(a,f)}_{k_1}}{P_J^{\rm max}l\eta_{\mathcal{B}}\gamma_{\rm th}m_{a,s}(\nu_{\mathcal{B}}-1)}\nonumber\\
&\hspace{-0.7cm}\times\bigg[1-\bigg(1+\frac{P_J^{\rm max}l\eta_{\mathcal{B}}\gamma_{\rm th}m_{a,s}}{\nu_{\mathcal{B}}P_ag^{(a,f)}_{k_1}}\bigg)^{\!\!1-\nu_{\mathcal{B}}}\bigg].
\end{align}
\end{theorem}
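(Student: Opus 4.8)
The plan is to reduce the outage event to a linear threshold on the desired-link fading power and then integrate out the remaining randomness in the order $P_J$ first, then the jamming fading $|\tilde{h}_{ab,s}|^2$. Since the total directivity gain factors into transmitter and receiver gains, and Bob points the same receive beam in both terms, I would write $G_{ab,f}=g^{(a,f)}g^{(b)}$ and $G_{ab,s}=m_{a,s}g^{(b)}$ in \eqref{gab}. Conditioning on the blockage instance $\mathcal{B}$ and on the two independent gain realizations $g^{(a,f)}$ and $g^{(b)}$---which produces the three outer sums in \eqref{pout} with weights $P_{ab}(\mathcal{B})$, $b_{k_1}^{(a,f)}$ and $b_{k_2}^{(b)}$---the event $\gamma_{ab}<\gamma_{\rm th}$ rearranges to
\[
|\tilde{h}_{ab,f}|^2<\frac{\gamma_{\rm th}P_Jm_{a,s}}{P_ag^{(a,f)}}|\tilde{h}_{ab,s}|^2+\frac{\gamma_{\rm th}\sigma^2_b}{P_ag^{(a,f)}g^{(b)}L_{ab}}.
\]
Crucially, the additive (noise) term carries all the $g^{(b)}$ and $L_{ab}$ dependence while the slope depends only on $g^{(a,f)}$ and $P_J$; this already explains why $g^{(b)}$ and the path loss enter \eqref{pout} solely through the exponential prefactor and are absent from $V$.

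Next I would condition on $|\tilde{h}_{ab,s}|^2=y$ and on $P_J$, and apply Alzer's lemma to the conditional CDF of the normalized gamma RV $|\tilde{h}_{ab,f}|^2$, followed by the binomial theorem---exactly as in step $(a)$ of the proof of \Tref{col2}. The $l=0$ term contributes the leading $1$ inside the bracket of \eqref{pout}, while each $l\geq1$ term factors as $\binom{\nu_{\mathcal{B}}}{l}(-1)^{l}e^{-l\eta_{\mathcal{B}}d}e^{-l\eta_{\mathcal{B}}cy}$, with $c$ and $d$ the slope and intercept above. The constant factor $e^{-l\eta_{\mathcal{B}}d}$ is precisely the exponential prefactor of \eqref{pout}. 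Averaging the remaining $e^{-l\eta_{\mathcal{B}}cy}$ over the uniform $P_J\in[0,P_J^{\rm max}]$ is an elementary exponential integral and yields $(1-e^{-Ay})/(Ay)$ with $A\triangleq l\eta_{\mathcal{B}}\gamma_{\rm th}m_{a,s}P_J^{\rm max}/(P_ag^{(a,f)})$; the leading $1/A$ here is what produces the $P_ag^{(a,f)}/(P_J^{\rm max}l\eta_{\mathcal{B}}\gamma_{\rm th}m_{a,s})$ prefactor of $V$.

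The main obstacle is the final expectation over the jamming fading, namely
\[
\frac{1}{A}\frac{\nu_{\mathcal{B}}^{\nu_{\mathcal{B}}}}{\Gamma(\nu_{\mathcal{B}})}\int_{0}^{\infty}y^{\nu_{\mathcal{B}}-2}\bigl(1-e^{-Ay}\bigr)e^{-\nu_{\mathcal{B}}y}\,dy,
\]
where the $1/y$ from the $P_J$-average combines with the gamma density to leave the power $y^{\nu_{\mathcal{B}}-2}$. For $\nu_{\mathcal{B}}\geq2$ both resulting integrals converge to gamma-function values and give the $\left(1+A/\nu_{\mathcal{B}}\right)^{1-\nu_{\mathcal{B}}}$ form of $V(\nu_{\mathcal{B}}\geq2,l,g^{(a,f)}_{k_1})$. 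For $\nu_{\mathcal{B}}=1$ the two integrals diverge individually (logarithmically at the origin) and only their difference is finite; I would treat this exactly as in the proof of \Tref{col2}, expressing each as an exponential integral ${\rm Ei}(\cdot)$ and taking the limit at $0$ to obtain $\ln(1+A)$, which after the $1/A$ factor reduces to the logarithmic $V(1,l,g^{(a,f)}_{k_1})$. Reassembling the three outer conditioning sums with the exponential prefactor and the two cases of $V$ then yields \eqref{pout}.
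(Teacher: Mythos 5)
Your proposal is correct and arrives at exactly \eqref{pout}, but the final integrations are carried out in a different order than in the paper's proof, and this changes the character of the computation. Both arguments share the same skeleton: condition on $\mathcal{B}$, $g^{(a,f)}$, $g^{(b)}$, rewrite $\gamma_{ab}<\gamma_{\rm th}$ as a linear threshold on $|\tilde{h}_{ab,f}|^2$ (your gain factorization $G_{ab,f}=g^{(a,f)}g^{(b)}$, $G_{ab,s}=m_{a,s}g^{(b)}$, with $g^{(b)}$ cancelling in the slope, is precisely what underlies the paper's constants $C_2$ and $C_3$), and apply Alzer's lemma with the binomial theorem so that the $l=0$ term gives the leading $1$ and the intercept produces the exponential prefactor. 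The paper then integrates out the jamming fading \emph{first}, via the MGF of the normalized gamma RV, which collapses that expectation to $\bigl(1+l\eta_{\mathcal{B}}C_2P_J/\nu_{\mathcal{B}}\bigr)^{-\nu_{\mathcal{B}}}$, and only then averages over the uniform $P_J$; the remaining integral $\int_0^{P_J^{\rm max}}(1+ax)^{-\nu_{\mathcal{B}}}\,dx$ is elementary, every quantity stays finite, and the logarithm in $V(1,l,g^{(a,f)}_{k_1})$ appears simply because $\int (1+ax)^{-1}dx=\ln(1+ax)/a$. You instead average over $P_J$ first, obtaining $(1-{\rm e}^{-Ay})/(Ay)$, and defer the fading expectation to the end, which forces you to confront $\int_0^\infty y^{\nu_{\mathcal{B}}-2}(1-{\rm e}^{-Ay}){\rm e}^{-\nu_{\mathcal{B}}y}\,dy$ with its individually divergent pieces at $\nu_{\mathcal{B}}=1$, resolved through the ${\rm Ei}$-limit argument. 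This is in effect the proof technique the paper uses for \Tref{col2}, transplanted to \Tref{thm3}; it is valid (I checked that both cases of your final integral reproduce $V$ exactly), and it has the conceptual merit of unifying the two theorems under one computation, but the paper's ordering is the cleaner one here since it avoids any divergent intermediate expressions and needs nothing beyond the MGF and a power-function antiderivative.
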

\begin{proof}
Given the SINR of the Alice-Bob link in \eqref{gab}, we have for the outage probability conditioned on the blockage instance $\mathcal{B}$, and the antenna gains $g^{(a,f)}$ and $g^{(b)}$
\begin{align}\label{proofthm3-1}
P_{\rm out,C}^{\rm AB}\triangleq&\Pr(\gamma_{ab}<\gamma_{\rm th}|\mathcal{B},g^{(a,f)},g^{(b)})\nonumber\\
\stackrel{(a)}{=}&\Pr\left(|\tilde{h}_{ab,f}^{(\mathcal{B})}|^2<C_2P_J|\tilde{h}_{ab,s}^{(\mathcal{B})}|^2+C_3\right)\nonumber\\
&\hspace{-1.4cm}\stackrel{(b)}{=}\sum_{l=0}^{\nu_{\mathcal{B}}}\binom{\nu_{\mathcal{B}}}{l}(-1)^{l}{\rm e}^{-l\eta_{\mathcal{B}}C_3}\E_{P_J,|\tilde{h}^{(\mathcal{B})}_{ab,s}|^2}\!\!\left[{\rm e}^{-l\eta_{\mathcal{B}}C_2P_J|\tilde{h}^{(\mathcal{B})}_{ab,s}|^2}\!\right]\nonumber\\
&\hspace{-1.4cm}\stackrel{(c)}{=}\sum_{l=0}^{\nu_{\mathcal{B}}}\binom{\nu_{\mathcal{B}}}{l}(-1)^{l}{\rm e}^{-l\eta_{\mathcal{B}}C_3}\E_{P_J}\!\!\left[\!\left(1+\frac{l\eta_{\mathcal{B}}C_2P_J}{\nu_{\mathcal{B}}}\right)^{\!\!-\nu_{\mathcal{B}}}\right]
\nonumber\\
&\hspace{-1.4cm}\stackrel{(d)}{=}\!1\!+\!\sum_{l=1}^{\nu_{\mathcal{B}}}\!\binom{\nu_{\mathcal{B}}}{l}\!\frac{(-1)^{l}{\rm e}^{-l\eta_{\mathcal{B}}C_3}}{P_J^{\rm max}}\!\!\int_{0}^{P_J^{\rm max}}\!\!\!\!\!\!\!\Big(\!1\!+\!\frac{l\eta_{\mathcal{B}}C_2x}{\nu_{\mathcal{B}}}\Big)^{\!\!-\nu_{\mathcal{B}}}\!\!dx.\!\!
\end{align}
where in step $(a)$ we have defined $C_2\triangleq \gamma_{\rm th}m_{a,s}/(P_ag^{(a,f)})$ and $C_3\triangleq \gamma_{\rm th}\sigma^2_b/(P_ag^{(a,f)}g^{(b)}L_{ab}^{(\mathcal{B})})$. Moreover, step $(b)$ follows by Alzer's lemma together with the binomial theorem, and step $(c)$ is derived using the MGF of the normalized gamma RV $|\tilde{h}^{(\mathcal{B})}_{ab,s}|^2$. Finally,  taking the integral  in step $(d)$ and recalling the definition of  the function $V(\nu_{\mathcal{B}},l,g^{(a,f)}_{k_1})$ from the statement of the theorem complete the proof. 
\end{proof}
\subsection{Maximum Effective Covert Rate}
Given any target data rate $R_b$, Alice and Bob can have the effective communication rate $\overline{R}_{a,b}\triangleq R_b(1-P_{\rm out}^{\rm AB})$, where their outage probability $P_{\rm out}^{\rm AB}$ is obtained using \Tref{thm3}. The goal here is to determine the optimal value of $P_J^{\rm max}$ that maximizes $\overline{R}_{a,b}$ while also satisfying the covertness requirement, i.e., $\E[P^*_{e,w}]\geq 1-\epsilon$ for any $\epsilon>0$. Given that $\E[P^*_{e,w}]$ and $P_{\rm out}^{\rm AB}$  monotonically increase with $P_J^{\rm max}$ (hence $\overline{R}_{a,b}$ decreases with $P_J^{\rm max}$), we have the following proposition for the maximum covert communication rate that can be achieved in our setup. However, the optimal rate per \Pref{prop4} needs to be evaluated numerically. 

\begin{proposition}\label{prop4}
The maximum covert rate achievable in the considered setup, given fixed system and channel parameters and fixed covertness requirement $\epsilon$ and target data rate $R_b$, can be obtained as $\overline{R}^*_{a,b}\triangleq R_b(1-P_{\rm out}^{*\rm AB})$ where $P_{\rm out}^{*\rm AB}$ is defined as \eqref{Epew} while evaluated in $P_{J,\rm opt}^{\rm max}$ that is the solution of the equation $\E[P^*_{e,w}]=1-\epsilon$ for $P_J^{\rm max}$.
\end{proposition}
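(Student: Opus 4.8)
The plan is to recast the claim as a one-dimensional constrained optimization over the single free parameter $P_J^{\rm max}$ and to solve it by exploiting monotonicity. First I would write the effective rate explicitly as a function of this parameter, namely $\overline{R}_{a,b}(P_J^{\rm max}) = R_b\big(1 - P_{\rm out}^{\rm AB}(P_J^{\rm max})\big)$ with $P_{\rm out}^{\rm AB}$ given in closed form by \Tref{thm3} via \eqref{pout}, and then state the covertness requirement as the feasibility constraint $\E[P^*_{e,w}](P_J^{\rm max}) \geq 1-\epsilon$ together with $P_J^{\rm max} \geq 0$. The maximum covert rate is, by definition, the value of $\overline{R}_{a,b}$ at the maximizer of this program.

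Next I would invoke the two monotonicity facts stated just before the proposition, namely that both $\E[P^*_{e,w}]$ and $P_{\rm out}^{\rm AB}$ are strictly increasing in $P_J^{\rm max}$; these can be checked by differentiating the closed-form expressions \eqref{Epew} and \eqref{pout}, but for the present argument they may be taken as given. Monotonicity of $\E[P^*_{e,w}]$ turns the feasibility constraint into a half-line: evaluating \eqref{pe} at $P_J^{\rm max}=0$ gives $\lambda_1=\sigma_w^2<\lambda_2$ and hence $\E[P^*_{e,w}](0)=0$, while by Remark 1 we have $\E[P^*_{e,w}]\to 1$ as $P_J^{\rm max}\to\infty$. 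Since $\E[P^*_{e,w}]$ is continuous and strictly increasing, the intermediate value theorem yields, for every $\epsilon\in(0,1)$, a unique root $P_{J,\rm opt}^{\rm max}$ of $\E[P^*_{e,w}]=1-\epsilon$, and the feasible set is exactly $\{P_J^{\rm max} : P_J^{\rm max}\geq P_{J,\rm opt}^{\rm max}\}$.

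Finally, because $P_{\rm out}^{\rm AB}$ is increasing in $P_J^{\rm max}$, the objective $\overline{R}_{a,b}$ is decreasing in $P_J^{\rm max}$; hence over the feasible half-line its supremum is attained at the left endpoint $P_J^{\rm max}=P_{J,\rm opt}^{\rm max}$. Evaluating $\overline{R}_{a,b}$ there yields $\overline{R}^*_{a,b}=R_b\big(1-P_{\rm out}^{*\rm AB}\big)$ with $P_{\rm out}^{*\rm AB}$ the outage probability of \Tref{thm3} computed at $P_{J,\rm opt}^{\rm max}$, which is the asserted optimum. Conceptually the argument is a monotone trade-off: a larger jamming power improves covertness but degrades Alice--Bob reliability, so the optimum drives the covertness constraint to equality. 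The only point genuinely needing care is the existence and uniqueness of the root $P_{J,\rm opt}^{\rm max}$, i.e. verifying that $1-\epsilon$ lies in the range of $\E[P^*_{e,w}]$; this is where I would lean on continuity together with the endpoint behavior above, and—as the proposition already anticipates—solve the resulting transcendental equation numerically.
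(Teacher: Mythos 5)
Your proposal is correct and follows essentially the same route as the paper: the paper justifies \Pref{prop4} solely by the monotonicity remark preceding it (both $\E[P^*_{e,w}]$ and $P_{\rm out}^{\rm AB}$ increase with $P_J^{\rm max}$, so the covertness constraint binds at the optimum), which is exactly your monotone trade-off argument. Your additions---the intermediate-value/endpoint analysis giving existence and uniqueness of $P_{J,\rm opt}^{\rm max}$, and the implicit correction of the paper's reference to \eqref{Epew} (which should be \eqref{pout})---only make the same argument more rigorous.
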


\noindent\textbf{Remark 3.} Since in the special case of $\nu_{\mathcal{B}}=1$ the normalized gamma distribution simplifies to the exponential distribution with mean one, we argue that the results derived in this paper particularize to the same system model under Rayleigh fading channels by substituting $\nu_{\mathcal{B}}=1$.

\section{Numerical Results}
In this section, we provide the numerical results for various performance metrics delineated in \Tref{col2}, \Tref{thm3}, and \Pref{prop4}. 
We consider the link lengths $d_{aw}=d_{ab}=25$ \si{m}, path loss exponents $\alpha_{\rm L}=2$, $\alpha_{\rm N}=4$,  path loss intercepts $C_{\rm L}=C_{\rm N}=10^{-7}$, Nakagami fading parameters $\nu_{\rm L}=3$, $\nu_{\rm N}=2$, main lobe gains $M_{a,f}=M_{a,s}=M_{b}=15$ \si{dB}, side lobe gains $m_{a,f}=m_{a,s}=m_{b}=-5$ \si{dB}, noise power $\sigma^2_w=\sigma^2_b=-74$ \si{dBm}, and array beamwidths $\theta_{a,f}=\theta_{a,s}=\theta_{b}=30^{\rm o}$, unless explicitly mentioned. 
%The considered parameters are listed in Table I unless explicitly mentioned. 
We further assume that the beamsteering error follows a Gaussian distribution with mean zero and variance $\Delta^2$; hence, $F_{|\mathcal{E}|}(x)={\rm erf}(x/(\Delta\sqrt{2}))$ where ${\rm erf}({\cdot})$ denotes the error function \cite{di2015stochastic}. 
And we choose $\Delta=5^{\rm o}$ in our numerical analysis unless explicitly specified. 

%\begin{table}[t]
%	\centering
%	\caption{Parameters used for the numerical analysis.}
%	\label{T2}
%	\begin{tabular}{M{1.95in}||M{1.1in}}
%		Coefficients & Values\\
%		\hline \hline
%		Blockage parameters $(p_l,R_{\rm B})$& $(0.3027,200~{\rm m})$\\\hline
%		Link lengths $(d_{aw},d_{ab})$ & $(100,100)$ \si{m}\\\hline
%		Path loss exponents $(\alpha_{\rm L},\alpha_{\rm N})$ & $(2,4)$\\\hline
%	Path loss intercepts $(C_{\rm L},C_{\rm N})$ & $(10^{-7},10^{-7})$ \\\hline	
%Main lobe gains $(M_{a,f},M_{a,s},M_{b})$&$(10,10,10)$ \si{dB}\\\hline
%Side lobe gains $(m_{a,f},m_{a,s},m_{b})$&$(-10,-10,-10)$ \si{dB}\\\hline
%Noise power $(\sigma^2_w,\sigma^2_b)$ & $(-74,-74)$ \si{dBm}\\\hline
%Nakagami fading parameters $(\nu_{\rm L},\nu_{\rm N})$ & $(3,2)$\\\hline
%Array beamwidths $(\theta_{a,f},\theta_{a,s},\theta_{b})$ & $(30^{\rm o},30^{\rm o},30^{\rm o})$
%	\end{tabular}
%%	\vspace{-0.2in}
%\end{table}

\begin{figure}
	\centering
	\includegraphics[width=3.6in]{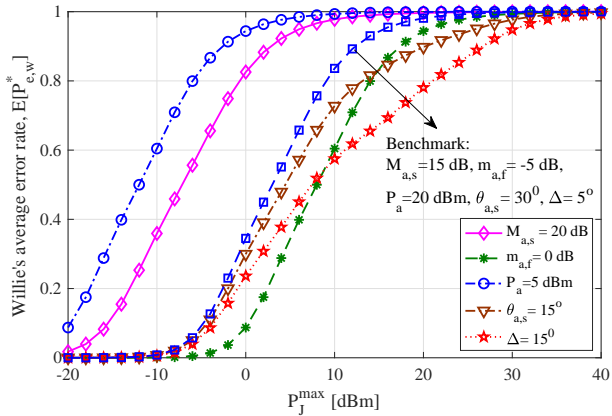}
	\caption{The expected value $\E[P^*_{e,w}]$ of Willie's detection error rate for a benchmark scenario with $M_{a,s}=15$ \si{dB}, $m_{a,f}=-5$ \si{dB}, $P_a=20$ \si{dBm}, $\theta_{a,s}=30^{\rm o}$, and  $\Delta=5^{\rm o}$. The effect of different parameters is explored by considering the values $M_{a,s}=20$ \si{dB}, $m_{a,f}=0$ \si{dB}, $P_a=5$ \si{dBm},  $\theta_{a,s}=15^{\rm o}$, and  $\Delta=15^{\rm o}$  while the rest of the parameters are exactly the same as the benchmark scenario.}
	\vspace{-0.15in}
\end{figure}
Figure 1 shows the expected value $\E[P^*_{e,w}]$ of Willie's detection error rate for various transceiver parameters. It is observed that $\E[P^*_{e,w}]$ monotonically increases with $P_J^{\rm max}$. Moreover, increasing $M_{a,s}$ and $\theta_{a,s}$ increases $\E[P^*_{e,w}]$ while increasing $m_{a,f}$, $P_a$, and $\Delta$ decreases $\E[P^*_{e,w}]$. 

\begin{figure}
	\centering
	\includegraphics[width=3.6in]{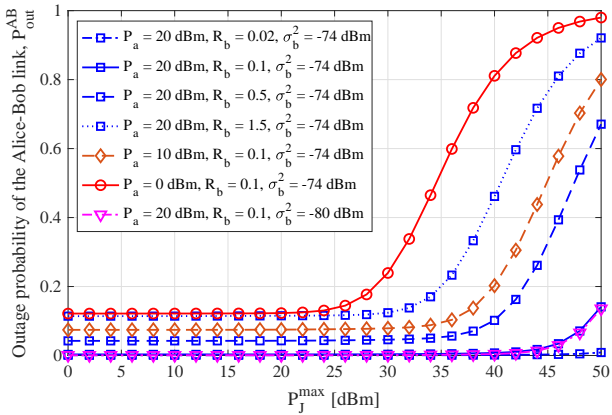}
	\caption{The outage probability of the Alice-Bob link for  various values of the transmit power $P_a$, threshold rate $R_b$, and noise variance $\sigma^2_b$.}
	\vspace{-0.15in}
\end{figure}
Figure 2 illustrates that the reliability of Alice-to-Bob transmission degrades by increasing the threshold rate $R_b$ and the noise variance $\sigma^2_b$ while increasing $P_a$ improves the performance. Also, $P_{\rm out}^{\rm AB}$ monotonically increases with $P_J^{\rm max}$.

% Comparing Figures 1 and 2 shows that mmWave links can stay reliable and covert for various link parameters, e.g., the benchmark scenario in Figure 1 requires $P_{J,\rm opt}^{\rm max}=$

Effective covert rates corresponding to the benchmark scenario in Figure 1 is summarized in Table I for $\epsilon=0.05$ and various threshold rates. It is observed that, for a given link,  the effective covert rate first increases and then decreases by increasing the threshold rate. Note that mmWave links benefit from much larger bandwidths compared to RF links; hence, the results in Table I imply much higher data rates, in bits per second.

\begin{table}[t]
	\centering
	\caption{Covert rates for $\epsilon=0.05$ and various threshold rates. }
	\label{T1}
	\begin{tabular}{M{0.7cm}||M{0.9cm}M{0.8cm}M{0.8cm}M{0.8cm}M{0.8cm}M{0.8cm}}  
		$R_b$ & $0.1$ & $0.5$& $1$& $2.5$ & $5$ & $10$  \\ \hline\hline 
		$P_{\rm out}^{*\rm AB}$ & $0.00313$ & $0.04253$ & $0.0935$ & $0.121$ & $0.1308$ & $0.9913$ \\ \hline
$\overline{R}^*_{a,b}$ & $0.0997$ &   $0.4787$ &   $0.9065$  &  $2.1975$  &  $4.3460$ &   $0.0870$
	\\ \hline 
	\end{tabular}
\vspace{-0.15in}
\end{table}

\section{Conclusion}
In this paper, we investigated covert communication over mmWave links. We employed a dual-beam transmitter to simultaneously transmit the desired signal to the destination and propagate a jamming signal to degrade the warden's  performance. We derived the closed-form expressions for the expected value of the warden's error rate and the outage probability of the Alice-Bob link. Our results demonstrated the superiority of mmWave links compared to RF links in terms of effective covert rates.

%\begin{figure*}[!t]
%	%	\normalsize
%	%\vspace{-0.4in}
%	%	\vspace{-0.25cm}
%	%	\setcounter{equation}{11}
%	\begin{align}\label{RC-ET}
%	T
%	\end{align}
%	\hrulefill
%	%	\vspace{-0.25cm}
%\end{figure*}

%\begin{figure}\label{fig4}
%	\centering
%	\includegraphics[width=3.6in]{Figure4.eps}
%	\caption{Sum-rate outage probability results of the mixed RF-FSO NOMA system for  $L_1=10^{-6}$, $L_2=2\times10^{-7}$, $P=20$ \si{dBm}, $K_{\mathcal{I}}=1$, and different values of $\gamma_{\rm th}^{\Sigma}$, $\zeta$, and $G$.}
%	\vspace{-0.15in}
%\end{figure}

%\bibliographystyle{IEEEtran}
%%\vspace{-2mm}
%\bibliography{IEEEabrv}

% Generated by IEEEtran.bst, version: 1.14 (2015/08/26)

\end{document}